\let\mathbb=\mathds
\DeclareMathOperator{\Tr}{Tr}
\DeclareMathOperator{\spec}{spec}
\DeclareMathOperator{\e}{\mathrm{e}}
\newcommand{\be}{{\mathbf e}}
\def\0{{\mathbf{0}}}
\def\1{{\mathbf{1}}}
\def\2{{\mathbf{2}}}
\def\3{{\mathbf{3}}}
\def\4{{\mathbf{4}}}
\def\5{{\mathbf{5}}}
\def\6{{\mathbf{6}}}
\def\7{{\mathbf{7}}}
\def\8{{\mathbf{8}}}
\def\9{{\mathbf{9}}}
\def\be{\begin{equation}}
	\def\ee{\end{equation}}
\def\bea{\begin{eqnarray}}
	\def\eea{\end{eqnarray}}
\theoremstyle{plain}
\newtheorem*{theorem_BLP}{Theorem BLP}
\newtheorem*{theorem_GBLP}{Theorem GBLP}
\newtheorem{conj}{Conjecture} 
\newtheorem{lemm}[theo]{Lemma} 
\theoremstyle{definition}
\theoremstyle{remark}
\newtheorem{remark}{Remark}[section]
\newcommand{\vertiii}[1]{{\left\vert\kern-0.25ex\left\vert\kern-0.25ex\left\vert #1
		\right\vert\kern-0.25ex\right\vert\kern-0.25ex\right\vert}}
\newcommand{\opnorm}{\@ifstar\@opnorms\@opnorm}
\newcommand{\@opnorms}[1]{%
	\left|\mkern-1.5mu\left|\mkern-1.5mu\left|
	#1
	\right|\mkern-1.5mu\right|\mkern-1.5mu\right|
}
\newcommand{\@opnorm}[2][]{%
	\mathopen{#1|\mkern-1.5mu#1|\mkern-1.5mu#1|}
	#2
	\mathclose{#1|\mkern-1.5mu#1|\mkern-1.5mu#1|}
}
\begin{document}
	
	\let\origmaketitle\maketitle
	\def\maketitle{
		\begingroup
		\def\uppercasenonmath##1{} 
		\let\MakeUppercase\relax 
		\origmaketitle
		\endgroup
	}
	
	\title{\bfseries \Large{On Araki-Type Trace Inequalities
	}}
	
	\author{ \normalsize 
		{Po-Chieh Liu}$^{1,2}$
		and
		{Hao-Chung Cheng}$^{1\text{--}5}$
	}
	\address{\small  	
		$^1$Department of Electrical Engineering and Graduate Institute of Communication Engineering,\\ National Taiwan University, Taipei 106, Taiwan (R.O.C.)\\
		$^2$Department of Mathematics, National Taiwan University\\
		$^3$Center for Quantum Science and Engineering,  National Taiwan University\\
		$^4$Hon Hai (Foxconn) Quantum Computing Center, New Taipei City 236, Taiwan (R.O.C.)\\
		$^5$Physics Division, National Center for Theoretical Sciences, Taipei 10617, Taiwan (R.O.C.)\\
	}
	
	\email{\href{mailto:haochung.ch@gmail.com}{haochung.ch@gmail.com}}
	
	\date{\today}
	
	\begin{abstract} 
		In this paper, we prove a trace inequality
		$\Tr\left[ f(A) A^s B^s \right]
		\leq 
		\Tr\big[ f(A) (A^{1/2} B A^{1/2} )^s \big]$ for any positive and monotonically increasing function $f$, $s\in[0,1]$, and positive semi-definite matrices $A$ and $B$.
		On the other hand, if $s\in[0,1]$ and the map $x\mapsto x^sg(x)$ is positive and decreasing, then 
		$
		\Tr\big[ g(A) (A^{1/2} B A^{1/2} )^s \big]
		\leq 
		\Tr\left[ g(A) A^s B^s \right]$.
	\end{abstract}
	
	\subjclass{Primary 15A42, 15A45, 15A60, 47A60}
	
	\keywords{
		Araki--Lieb--Thirring inequality,
		Golden--Thompson inequality,
		trace inequality
	}

	\maketitle
	
	
	\section{Introduction} \label{sec:introduction}
	
	The \emph{Golden--Thompson trace inequality} \cite{Gol65, Tho65, FT14}, i.e., for Hermitian matrices $H,K$,
	\begin{align} \label{eq:GT}
		\Tr\left[ \mathrm{e}^{H+K} \right]
		\leq \Tr\left[ \mathrm{e}^H \mathrm{e}^K \right],
	\end{align}
	and the \emph{Araki--Lieb--Thirring inequality} \cite{LT76, Ara90}: for $A,B\geq 0$, 
	\begin{align} \label{eq:ALT}
		\Tr\left[ \left( A^{1/2} B A^{1/2} \right)^r \right] 
		\leq \Tr\left[ A^{r/2} B^r A^{r/2} \right], \quad r\geq 1
	\end{align}
	(the inequality is reversed\footnote{In some literature, 
		\eqref{eq:ALT} for $r\geq 1$ is called Lieb--Thirring inequality \cite{LT76}, and the reversed inequality for $r\in[0,1]$ is sometimes called the Araki--Cordes inequality \cite{Ara90, Cordes87}. In this paper, we will use the term Araki--Lieb--Thirring inequality, referring to the inequality for all $r\geq 0$.}
	for $r \in [0,1]$)
	play important roles in statistical mechanics, quantum information theory, and quantum computing.
	For example, \eqref{eq:GT} implies that 
	\begin{align}
		D_{\alpha}(\rho\Vert\sigma):= \frac{1}{\alpha-1} \log \Tr\left[ \rho^{\alpha} \sigma^{1-\alpha} \right]
		\leq
		\frac{1}{\alpha-1} \log \Tr\left[ \e^{\alpha \log \rho + (1-\alpha) \log \sigma } \right] =: D_{\alpha}^{\flat}(\rho\Vert\sigma), \quad \forall \alpha \in (0,1),
	\end{align}
	where $D_{\alpha}(\rho\Vert\sigma)$ is the Petz--R\'enyi divergence \cite{Pet86}, $D_{\alpha}^{\flat}(\rho\Vert\sigma)$ is the log-Euclidean R\'enyi divergence, and $\rho,\sigma$ are density matrices.
	On the other hand, \eqref{eq:ALT} implies that
	\begin{align}
		D_{\alpha}^*(\rho\Vert\sigma) := \frac{1}{\alpha-1} \log \Tr\left[ \left( \sigma^{\frac{1-\alpha}{2\alpha}} \rho \sigma^{\frac{1-\alpha}{2\alpha}} \right)^{\alpha} \right]
		\leq D_{\alpha}(\rho\Vert\sigma), \quad \forall \alpha > 0,
	\end{align}
	where $D_{\alpha}^*(\rho\Vert\sigma)$ is the sandwiched R\'enyi divergence \cite{Mul13, WWY14}.
	Essentially, the Golden--Thompson inequality and the Araki--Lieb--Thirring inequality manifest the noncommutative nature of quantum mechanics and show that certain inequalities, instead of equalities, arise in analytical derivations. 
	
	Araki--Lieb--Thirring trace inequality \eqref{eq:ALT} can be strengthened to the log-majorization relation \cite{Ara90}:
	\begin{align} \label{eq:ALT_log}
		\left( A^{1/2} B A^{1/2} \right)^r 
		\prec_{\log}  A^{r/2} B^r A^{r/2} , \quad r\geq 1
	\end{align}
	(the log-majorization relation reverses for $r\in[0,1]$).
	Here, for matrices $A,B \in \mathcal{M}_n(\mathds{C})$ with nonnegative eigenvalues, the \emph{log-majorization} of $A$ by $B$, denoted by $A \prec_{\log} B$, means that
	\begin{align}
		\prod_{i=1}^k \lambda_i(A) \leq \prod_{i=1}^k \lambda_i(B), \quad k = 1,2\ldots, n,
	\end{align}
	and $\det{A} = \det{B}$, where $\lambda_i(A)$ is the $i$-th eigenvalue of $A$ arranged in non-increasing order.
	Likewise, \eqref{eq:GT} can be strengthened by \eqref{eq:ALT_log} and the Lie--Trotter formula \cite{AH94}:
	\begin{align}
		\e^{H+K} \prec_{\log} \left( \e^{pH/2} \e^{pK} \e^{pH/2} \right)^{1/p}, \quad \forall p>0.    
	\end{align}
	Ando and Hiai later obtained a log-majorization via Löwner--Heinz operator inequality and the \emph{antisymmetric tensor power trick} (see e.g., \cite[Lemma 2.4]{operator_book}), which complements the Golden--Thompson type
	\cite{AH94}:
	\begin{align} \label{eq:AH94}
		\left[ \mathrm{e}^{ pK/2 } \left( \mathrm{e}^{ -pK/2 } \mathrm{e}^{ pH} \mathrm{e}^{ -pK/2 } \right)^{\alpha} \mathrm{e}^{ pK/2 }
		\right]^{1/p}
		\prec_{\log}
		\e^{\alpha H + (1-\alpha) K}, \quad \alpha \in[0,1], \; p>0.
	\end{align}
	The result then implies that
	\begin{align}
		D_{\alpha}^{\flat} (\rho\Vert\sigma)
		\leq \frac{1}{\alpha-1} \log \Tr\left[\rho \left( \rho^{-1/2} \sigma \rho^{-1/2} \right)^{1-\alpha} \right] =: \widehat{D}_{\alpha}(\rho\Vert \sigma), \quad \alpha \in (0,1].
	\end{align}
	
	In addition, two relatively recent works further generalize these classical results: Bourin and Lee established a family of matrix inequalities from a two-variable functional,
	which in particular provides an alternative proof of the Golden--Thompson inequality in log-majorization (see \cite[Corollary 2.12]{BL16}), while Hiai developed a generalization of Araki’s log-majorization via a log-convexity theorem \cite[Theorem 3.1]{Hia16b}.

	Bebiano, Lemos, and Providência applied Ando and Hiai's proof technique, along with Furuta's operator inequality \cite{Fur87}, to obtain the following log-majorization of Araki's type.
	\begin{theorem_BLP}[BLP Inequality {\cite{BLP05}}] \hypertarget{theo:BLP}{} 
			For $A,B\geq 0$,
			\begin{align}
				A^{1+q} B^q \prec_{\log} A \left( A^{r/2} B^r A^{r/2} \right)^{q/r}, \quad 0<q\leq r.
			\end{align}
			Equivalently, 
			\begin{align} \label{eq:BLP}
				A^{t+s} B^s \prec_{\log} A^t \left( A^{1/2} B A^{1/2} \right)^s, \quad s\in(0,1], \; t>0.
			\end{align}
		\end{theorem_BLP}
		Theorem~\hyperlink{theo:BLP}{BLP} provides an alternative proof to the following trace inequality
		\begin{align} \label{eq:HP93}
			\Tr\left[ A\left( \log A + \log B \right) \right]
			\leq \frac{1}{p} \Tr\left[ A \log \left( A^{p/2} B^p A^{p/2} \right) \right], \quad p>0
		\end{align}
		(see also the early paper by Hiai and Petz \cite{HP93}, the complementation by Ando and Hiai \cite{AH94}, and the recent developments by Lieb and Carlen \cite{LCV17}).
		Inequality \eqref{eq:HP93} then implies that the following relation between Umegaki's quantum relative entropy $D(\rho\Vert\sigma)$ \cite{Ume56} and Belavkin--Staszewski's relative entropy $\widehat{D}(\rho\Vert \sigma)$ \cite{BS82}, i.e.,
		\begin{align}
			D(\rho\Vert\sigma) := \Tr\left[ \rho (\log \rho - \log \sigma ) \right]
			\leq 
			\Tr\left[ \rho \log \left( \rho^{1/2} \sigma^{-1} \rho^{1/2} \right) \right] =:
			\widehat{D}(\rho\Vert \sigma).
		\end{align}
		
		With efforts to develop the reversed BLP inequalities \cite{MNF08, Fur12, LS18}, the above log-majorizations can be unified as the following generalized BLP (GBLP) inequalities.\footnote{
			Some literature has studied different generalizations of the BLP inequality, \eqref{eq:BLP}, and used the term generalized BLP \cite{FNT07, MF09, Fur07, FMN22}.
			In this paper, we refer to the inequalities and the reversed versions with the possible range of parameters as in Theorem~\hyperlink{theo:GBLP}{GBLP} as the GBLP inequality.
		}
		\begin{theorem_GBLP}[GBLP Inequality {\cite{BLP05, MNF08, Fur12, LS18}}] \hypertarget{theo:GBLP}{}
			For $A,B\geq 0$,
			\begin{align} \label{eq:GBLP}
				A^{r+q} B^q \prec_{\log} A^r \left( A^{p/2} B^p A^{p/2} \right)^{q/p}, \quad  0< q \leq p, \; r\geq 0.
			\end{align}
			If either $0\leq r\leq p \leq q$ and $p>0$, or 
			$0\leq q < p$ and $-r \geq q$, then \eqref{eq:GBLP} holds with reversed log-majorization.
		\end{theorem_GBLP}
		
		Theorem~\hyperlink{theo:GBLP}{GBLP}, together with a result of Matharu and Aujla \cite{MA12} (which follows from Ando and Hiai’s work \cite[Corollary 2.3]{AH94}, or directly from the Furuta inequality \cite{Fur87}), yields the following relations:
		\begin{align}
			\begin{split} \label{eq:MA12}
				A\left( A^{-1/2} B A^{-1/2} \right)^{s} &\preceq_{\log} A^{1-s } B^{s}, \quad s \in [0,1] \cup [2,\infty);
				\\
				A\left( A^{-1/2} B A^{-1/2} \right)^{s} &\succeq_{\log} A^{1-s } B^{s}, \quad s \in [1,2].
			\end{split}
		\end{align}
		The relations given in \eqref{eq:MA12} are intriguing in the following sense.
		Considering $s\in [0,1]$, then Araki--Lieb--Thirring inequality shows that $\left( A^{-1/2} B A^{-1/2} \right)^{s} \succeq_{\log} A^{-s } B^{s}$.
		However, when multiplying both sides by $A$, distributing the power $s$ yields a reversed log-majorization $A\left( A^{-1/2} B A^{-1/2} \right)^{s} \preceq_{\log} A^{1-s } B^{s}$.
		Hence, the goal of this paper is to identify for which functions $f$ the following trace inequality holds,
		\begin{align} \label{eq:goal}
			\Tr\left[ f(A) A^s B^s \right]
			\leq 
			\Tr\left[ f(A) \left(A^{1/2} B A^{1/2}\right)^s \right], \quad \forall \, s\in[0,1],
		\end{align}
		or when the trace inequality reverses.
		Obviously, the BLP inequality \eqref{eq:BLP} already provides an answer to \eqref{eq:goal} for $f(A) = A^t$, $t>0$.
		However, whenever $f$ is not a power function, the core idea of proving Theorems~\hyperlink{theo:BLP}{BLP} and \hyperlink{theo:GBLP}{GBLP}, i.e., the antisymmetric tensor power trick with the Furuta inequality, is not applicable.
		
		Our main result is to show that any nonnegative and nondecreasing function $f$ satisfies \eqref{eq:goal} (see Theorem~\ref{theo:main_direct}), 
		and, for any $s\in [0,1]$ and any function $g$ such that $x\mapsto x^s g(x)$ is nonnegative and nonincreasing, \eqref{eq:goal} is reversed (see Proposition~\ref{prop:main_converse}).
		
		We provide our main results and the proofs in Section~\ref{sec:proof}.
		In Section~\ref{sec:conjecture}, we discuss the scenario for $s>1$.

		\section{An Araki-type trace inequality} \label{sec:proof}
		
		We first show that \eqref{eq:goal} holds if $f(A)$ is a specific type of orthogonal projection matrix.
		\begin{prop} \label{prop:order}
			Let $A = \sum_{i=1}^n \lambda_i P_i$ be the spectral decomposition of positive semi-definite $A$
			with $\lambda_1 \geq \lambda_2 \geq \cdots \geq \lambda_{n} \geq 0$, and let $\Pi_k := \sum_{i=1}^k P_i$.
			Then, for all $B\geq 0$,
			\[
			\Tr\left[ \Pi_k A^s B^s \right]
			\leq 
			\Tr\left[ \Pi_k \left(A^{1/2} B A^{1/2} \right)^s \right] , \quad \forall \, s\in[0,1],\, 1\leq k\leq n.
			\]	
		\end{prop}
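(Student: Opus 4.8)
\emph{Plan.}
The plan is to proceed via the Stieltjes integral representation of $x\mapsto x^s$ together with a resolvent identity that brings the two sides into a directly comparable form, after which one is left with a single scalar integral inequality. By homogeneity we may replace $A$ by $A/\|A\|$ and assume $0\le A\le I$, and by continuity ($A\mapsto A+\epsilon I$) we may assume $A$ invertible; the cases $s\in\{0,1\}$ are trivial, so fix $s\in(0,1)$. Since $\Pi_k$ commutes with $A$, $\Tr[\Pi_k A^sB^s]=\Tr[(A^s\Pi_k)B^s]$ with $A^s\Pi_k\ge 0$. Applying $x^s=\tfrac{\sin\pi s}{\pi}\int_0^\infty\lambda^{s-1}\,x(\lambda+x)^{-1}\,\d\lambda$ to $B$ on the left-hand side and to $T:=A^{1/2}BA^{1/2}$ on the right-hand side, and using $(\lambda+A^{1/2}BA^{1/2})^{-1}=A^{-1/2}(\lambda A^{-1}+B)^{-1}A^{-1/2}$ together with $[\Pi_k,A]=0$, the claimed inequality becomes
\[
\int_0^\infty\lambda^{s-1}\Big(\Tr\big[\Pi_k\,B(\lambda A^{-1}+B)^{-1}\big]-\Tr\big[A^s\Pi_k\,B(\lambda+B)^{-1}\big]\Big)\,\d\lambda\ \ge\ 0 ,
\]
where each trace is real (e.g.\ $B(\lambda A^{-1}+B)^{-1}=I-\lambda A^{-1}(\lambda A^{-1}+B)^{-1}$, and $A^{-1}\Pi_k\ge0$ commutes with $A$) and both $\lambda$-integrals are absolutely convergent.

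\emph{The main obstacle.}
The integrand above is not of one sign in $\lambda$: already for $A=cI$, $B=I$ — a case of equality — one checks it is positive for small $\lambda$ and negative for large $\lambda$, with the full integral zero. So the inequality is genuinely global in $\lambda$ and cannot be obtained termwise. This difficulty is intrinsic: every ``soft'' operator move I can think of — pinching $B$ in the eigenbasis of $A$, compressing to $\mathrm{ran}\,\Pi_k$ before taking the $s$-th power, or applying the BLP inequality to the truncated operator $A\Pi_k$ — yields only the weaker bound $\Tr[\Pi_k A^sB^s]\le\Tr[(\Pi_k T\Pi_k)^s]$, and since $(\Pi_k T\Pi_k)^s\ge\Pi_k T^s\Pi_k$ by operator concavity of $x\mapsto x^s$, this over-shoots the target $\Tr[\Pi_k T^s]$. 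Hence the proof must genuinely use the exact value of the compressed power $\Tr[\Pi_k T^s]$, and it must exploit the hypothesis that $\Pi_k$ projects onto the eigenvectors of the $k$ \emph{largest} eigenvalues of $A$ — for the projection onto the smallest ones the inequality reverses, consistently with Proposition~\ref{prop:main_converse}. Converting the mixed-sign pointwise differences into a nonnegative weighted integral, while bringing in this ``top-$k$'' structure, is the crux.

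\emph{Closing the argument, and the payoff.}
Working in the common eigenbasis of $A$ and $\Pi_k$, with $A=\mathrm{diag}(\lambda_1,\dots,\lambda_n)$, I would try to prove the displayed integral inequality by combining, for each $\lambda$, the ordering $\lambda A^{-1}\ge\lambda I$ forced by $A\le I$ with the operator monotonicity of $x\mapsto x(\lambda+x)^{-1}$, and then performing an integration by parts (or a change of variables that aligns the two regularisers $\lambda A^{-1}$ and $\lambda I$) so that the coordinates carrying the larger $\lambda_i$ — precisely those retained by $\Pi_k$ — contribute with the favourable sign; the resulting scalar inequality should then be closed using Cauchy--Schwarz bounds $|b_{ij}|^2\le b_{ii}b_{jj}$ on the entries of $B$, and I would first test this mechanism in the $2\times 2$ and rank-one cases. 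A fall-back would be a Hadamard three-lines argument on the strip $0\le\Re z\le 1$ comparing $z\mapsto\Tr[\Pi_k A^{z/2}B^zA^{z/2}]$ with $z\mapsto\Tr[\Pi_k(A^{1/2}BA^{1/2})^z]$. Once Proposition~\ref{prop:order} is established, Theorem~\ref{theo:main_direct} is immediate: any nonnegative nondecreasing $f$ satisfies $f(A)=\sum_{k=1}^n c_k\Pi_k$ with $c_k=f(\lambda_k)-f(\lambda_{k+1})\ge 0$ for $k<n$ and $c_n=f(\lambda_n)\ge 0$, so summing the inequalities of Proposition~\ref{prop:order} against these nonnegative weights gives $\Tr[f(A)A^sB^s]\le\Tr[f(A)(A^{1/2}BA^{1/2})^s]$, and the reversed statement (Proposition~\ref{prop:main_converse}) follows in the same way.
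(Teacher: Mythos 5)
There is a genuine gap: your write-up is an honest research plan rather than a proof. You correctly reduce the claim to a single scalar-valued integral inequality via the representation $x^s=\frac{\sin\pi s}{\pi}\int_0^\infty\lambda^{s-1}x(\lambda+x)^{-1}\,\d\lambda$, and you correctly diagnose that the integrand changes sign in $\lambda$, so no termwise argument can work. But the step that would actually establish the inequality is never carried out: the proposed combination of operator monotonicity of $x\mapsto x(\lambda+x)^{-1}$, an unspecified integration by parts or change of variables, and entrywise Cauchy--Schwarz on $B$ is a list of candidate tools, not an argument, and you yourself flag it as something you ``would try.'' The fall-back three-lines idea also does not obviously close: for complex $z$ the operator $(A^{1/2}BA^{1/2})^z$ is not positive, the two functions agree only at the endpoints $z=0$ and (after cyclicity) $z=1$, and a Hadamard-type comparison of two analytic functions that agree on the boundary of a strip does not by itself yield an inequality in the interior. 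Moreover, your own observation that the soft compression bound overshoots (since $(\Pi_k T\Pi_k)^s\ge\Pi_k T^s\Pi_k$) shows that the missing step is exactly where all the content lies.

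For contrast, the paper closes this step by an entirely different mechanism that avoids the sign problem: it starts from the BLP trace inequality $\Tr[A^tA^sB^s]\le\Tr[A^t(A^{1/2}BA^{1/2})^s]$ for $t\ge 0$, derives by the substitutions $B\mapsto A^{-1/2}BA^{-1/2}$ and $A\mapsto A^{-1}$ the \emph{reversed} inequality for $t\le -s$ (Lemma~\ref{lemm:2}), and hence gets the forward inequality for the test functions $f(x)=x^t$, $t\ge0$, \emph{and} $f(x)=-x^t$, $t\le -s$. It then approximates $\Pi_k$ by $g_t(\tilde A_{k,\varepsilon})=1-(x/(\lambda_k-\varepsilon))^t$ with $t\to-\infty$ applied to the flattened operator $\tilde A_{k,\varepsilon}=\Pi_kA+(\lambda_k-\varepsilon)\Pi_k^{\mathrm c}$, and finally passes from $\tilde A_k$ back to $A$ inside the $s$-th power via the operator Jensen inequality with the contraction $R=(A\tilde A_k^{-1})^{1/2}$ (Lemma~\ref{lemm:1}). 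This is where the ``top-$k$'' structure enters: the flattening $\tilde A_k\ge A$ is what makes $R$ a contraction. If you want to salvage your integral-representation route you would need to supply a complete proof of the displayed $\lambda$-integral inequality; as it stands, the proposal establishes nothing beyond the (correct and easy) reduction and the (correct) final deduction of Theorem~\ref{theo:main_direct} from Proposition~\ref{prop:order}.
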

		
		\begin{remark}
			Proposition~\ref{prop:order} holds for $\Pi_k$ being a projection onto the eigenspace associated with the top $k$ eigenvalues.
			The trace inequality may not be true once replacing $\Pi_k$ by an arbitrary eigenprojection $P_i$. 
			To see a counterexample, let $s=1/2$ and
			\begin{align}
				A = \begin{bmatrix} 2 & 0 \\ 0 & 1\end{bmatrix}, \,
				B = \begin{bmatrix} 1 & 1 \\ 1 & 1\end{bmatrix}, \,
				P_2 = \begin{bmatrix} 0 & 0 \\ 0 & 1\end{bmatrix}.
			\end{align}
			Then, $\Tr[ P_2 A^s B^s ] = 1/\sqrt{2} \not\leq \Tr[P_2(A^{1/2} B A^{1/2})^s ] \approx 0.5774$.
		\end{remark}
		
		In the following proofs and lemmas, we suppose $A>0$, and the case $A\geq 0$ then follows by taking the limit.
		To prove Proposition~\ref{prop:order}, we need the following two lemmas.
		
		\begin{lemm} \label{lemm:1}
			Following the notation in Proposition~\ref{prop:order}, 
			we denote $I - \Pi_k$ by $\Pi_k^\text{c}$, and let
			$\tilde{A}_k=\Pi_k A+\lambda_k \Pi_k^\textnormal{c} =  \sum_{i=1}^n \max\{\lambda_i,\lambda_k\} P_i$.
			Then, we have
			\[
			\Tr\left[ \Pi_k \left(\tilde{A}_k^{1/2} B \tilde{A}_k^{1/2}\right)^s \right]
			\leq 
			\Tr\left[ \Pi_k \left(A^{1/2} B A^{1/2} \right)^s \right], \quad \forall\, s \in [0,1].
			\]
			The inequality reverses for all $s\in[1,2]$.
		\end{lemm}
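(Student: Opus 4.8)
The plan is to reduce \emph{both} halves of the lemma to a single inequality for resolvents, and then to settle that inequality by a Schur‑complement computation. Throughout I assume $A,B>0$ (the case $A\ge0$, $B\ge0$ follows by continuity, as does the endpoint $s=2$ from $s\in(1,2)$); at $s=1$ one has equality, since $\Pi_k$ commutes with both $A$ and $\tilde A_k$ and $\Pi_k\tilde A_k=\Pi_k A$, whence $\Tr[\Pi_k Y]=\Tr[A^{1/2}\Pi_k A^{1/2}B]=\Tr[\Pi_k X]$. Here and below $X:=A^{1/2}BA^{1/2}$ and $Y:=\tilde A_k^{1/2}B\tilde A_k^{1/2}$, both positive definite.

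\emph{Step 1 (reduction to a resolvent inequality).} I claim it suffices to prove
\begin{equation}\label{eq:plan-resolvent}
\Tr\big[\Pi_k(Y+tI)^{-1}\big]\ \ge\ \Tr\big[\Pi_k(X+tI)^{-1}\big],\qquad \forall\,t>0 .
\end{equation}
For $s\in(0,1)$ I would insert the integral representation $x^s=\tfrac{\sin\pi s}{\pi}\int_0^\infty x(x+t)^{-1}t^{s-1}\,\d t$, use $X(X+tI)^{-1}=I-t(X+tI)^{-1}$ and $\Tr\Pi_k=k$, and subtract the analogous identity for $Y$, obtaining
\[
\Tr[\Pi_k X^s]-\Tr[\Pi_k Y^s]=\frac{\sin\pi s}{\pi}\int_0^\infty t^{s}\Big(\Tr[\Pi_k(Y+tI)^{-1}]-\Tr[\Pi_k(X+tI)^{-1}]\Big)\d t\ \ge\ 0 .
\]
For $s\in(1,2)$ I would write $X^s=X\,X^{s-1}$, apply the same representation to $X^{s-1}$, use $X^2(X+tI)^{-1}=(X+tI)-2tI+t^2(X+tI)^{-1}$ together with $\Tr[\Pi_k X]=\Tr[\Pi_k Y]$, and get
\[
\Tr[\Pi_k X^s]-\Tr[\Pi_k Y^s]=\frac{\sin\pi(s-1)}{\pi}\int_0^\infty t^{s}\Big(\Tr[\Pi_k(X+tI)^{-1}]-\Tr[\Pi_k(Y+tI)^{-1}]\Big)\d t\ \le\ 0 ,
\]
which is precisely the reversed inequality. (Both integrals converge at $t=\infty$ because the $O(t^{-1})$ and $O(t^{-2})$ tails of the two resolvent traces cancel, the latter exactly because $\Tr[\Pi_k X]=\Tr[\Pi_k Y]$.)

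\emph{Step 2 (block form and Schur complements).} Decompose the space as $V_1\oplus V_2$ with $V_1=\operatorname{ran}\Pi_k$, $V_2=\operatorname{ran}\Pi_k^{\mathrm c}$. Both $A$ and $\tilde A_k$ are block‑diagonal: $A=A_1\oplus A_2$ and $\tilde A_k=A_1\oplus\lambda_kI_{V_2}$, with $\spec A_1\subseteq[\lambda_k,\infty)$ and $\spec A_2\subseteq(0,\lambda_k]$. Writing the blocks of $B$ as $B_{11},B_{12},B_{12}^*,B_{22}$, the matrices $X+tI$ and $Y+tI$ share the $(1,1)$‑block $P:=A_1^{1/2}B_{11}A_1^{1/2}+tI_{V_1}$, while $X+tI$ has off‑diagonal block $Q_X:=A_1^{1/2}B_{12}A_2^{1/2}$ and corner $R_X:=A_2^{1/2}B_{22}A_2^{1/2}+tI_{V_2}$, and $Y+tI$ has $Q_Y:=\lambda_k^{1/2}A_1^{1/2}B_{12}$ and $R_Y:=\lambda_kB_{22}+tI_{V_2}$. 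By the Schur‑complement formula, $\Tr[\Pi_k(X+tI)^{-1}]=\Tr\big[(P-Q_XR_X^{-1}Q_X^*)^{-1}\big]$ and likewise for $Y$; since $M\mapsto\Tr M^{-1}$ is antitone on positive definite matrices and both Schur complements are positive definite, \eqref{eq:plan-resolvent} follows once we establish the operator inequality
\begin{equation}\label{eq:plan-schur}
Q_XR_X^{-1}Q_X^*\ \le\ Q_YR_Y^{-1}Q_Y^* .
\end{equation}

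\emph{Step 3 (core operator inequality).} Conjugating \eqref{eq:plan-schur} by the invertible $A_1^{-1/2}$ and setting $C:=A_2^{1/2}$ (so $0<C\le\lambda_k^{1/2}I_{V_2}$), $G:=B_{22}\ge0$, $L:=B_{12}$, $u:=\lambda_k$, inequality \eqref{eq:plan-schur} becomes $L\,C(CGC+tI)^{-1}C\,L^*\le u\,L(uG+tI)^{-1}L^*$; conjugating by $L$, this in turn follows from
\begin{equation}\label{eq:plan-core}
C(CGC+tI)^{-1}C\ \le\ u\,(uG+tI)^{-1},
\end{equation}
which is elementary: since $C>0$ it reads $(G+tC^{-2})^{-1}\le(G+(t/u)I)^{-1}$, and this holds by antitonicity of inversion because $C\le u^{1/2}I$ forces $C^{-2}\ge u^{-1}I$, hence $tC^{-2}\ge(t/u)I$. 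This closes the chain \eqref{eq:plan-core}$\Rightarrow$\eqref{eq:plan-schur}$\Rightarrow$\eqref{eq:plan-resolvent}$\Rightarrow$ the lemma.

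I expect Step 1 to be the crux: recognizing that the $s\in[0,1]$ inequality and its $s\in[1,2]$ reversal are governed by the \emph{same} resolvent inequality \eqref{eq:plan-resolvent} (with the equality $\Tr[\Pi_k X]=\Tr[\Pi_k Y]$ doing the work in the $s>1$ case). After that, Steps 2–3 are a routine reduction — Schur complements, then two conjugations — to the scalar‑checkable inequality \eqref{eq:plan-core}; the only delicate bookkeeping is the convergence of the integrals in Step 1.
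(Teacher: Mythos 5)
Your proof is correct, but it follows a genuinely different route from the paper's. The paper's argument is a short application of the Hansen--Pedersen operator Jensen inequality: since $A\le\tilde A_k$ and the two commute, $R:=(A\tilde A_k^{-1})^{1/2}$ is a Hermitian contraction with $R\tilde A_k^{1/2}=A^{1/2}$ and $\Pi_k R=R\Pi_k=\Pi_k$, so operator concavity of $x\mapsto x^s$ on $[0,1]$ gives $R\,(\tilde A_k^{1/2}B\tilde A_k^{1/2})^s\,R\le (A^{1/2}BA^{1/2})^s$, and tracing against $\Pi_k$ finishes; operator convexity on $[1,2]$ reverses the inequality. You instead reduce both halves to the single resolvent inequality $\Tr[\Pi_k(\tilde A_k^{1/2}B\tilde A_k^{1/2}+tI)^{-1}]\ge\Tr[\Pi_k(A^{1/2}BA^{1/2}+tI)^{-1}]$ via the Cauchy integral representation of $x^s$ (using $\Tr[\Pi_k A^{1/2}BA^{1/2}]=\Tr[\Pi_k\tilde A_k^{1/2}B\tilde A_k^{1/2}]$ to flip the sign and control the tails for $s\in(1,2)$), and then verify that resolvent inequality by a Schur-complement computation that bottoms out in the scalar-checkable bound $(G+tC^{-2})^{-1}\le(G+(t/u)I)^{-1}$. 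Your version is longer and more computational, but it is more elementary -- it needs only antitonicity of the inverse rather than operator concavity/convexity and the operator Jensen inequality -- and it isolates the nice structural fact that the $s\in[0,1]$ inequality and its $s\in[1,2]$ reversal are both governed by one and the same resolvent comparison, with the trace identity at $s=1$ accounting for the change of direction. The only points to watch are routine: the Schur complements are automatically positive definite because $X+tI,Y+tI>0$, and the degenerate endpoints ($s=0$, singular $A$ or $B$) should be handled by the same limiting conventions the paper uses.
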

		
		\begin{proof}
			First, note that $A$ and each $\tilde{A}_k$ commute.
			Let $R=(A\tilde{A}_k^{-1})^{1/2}$, which is a Hermitian contraction, i.e.,~$R^2\leq I$. 
			For $s\in(0,1]$, since the function $x\mapsto x^s$ is operator concave on $[0,\infty)$ and vanishes at $x=0$, the operator Jensen's inequality \cite{HP82} implies that
			\[
			\left(A\tilde{A}_k^{-1}\right)^{1/2}\left(\tilde{A}_k^{1/2}B\tilde{A}_k^{1/2}\right)^s \left(A\tilde{A}_k^{-1}\right)^{1/2}
			= R\left(\tilde{A}_k^{1/2} B \tilde{A}_k^{1/2}\right)^s R
			\leq \left(R\tilde{A}_k^{1/2} B \tilde{A}_k^{1/2} R\right)^s
			= \left( A^{1/2} B A^{1/2}\right)^s,
			\]
			and the case $s=0$ is trivial.
			Therefore, we obtain
			\begin{align}
				\Tr\left[ \Pi_k \left(\tilde{A}_k^{1/2} B \tilde{A}_k^{1/2}\right)^s \right]
				&=\Tr\left[ \Pi_k \left(A\tilde{A}_k^{-1}\right)^{1/2}\left(\tilde{A}_k^{1/2}B\tilde{A}_k^{1/2}\right)^s \left(A\tilde{A}_k^{-1}\right)^{1/2} \right]\\
				&\leq 
				\Tr\left[ \Pi_k \left(A^{1/2} B A^{1/2}\right)^s \right],
			\end{align}
			using the fact that $\Pi_k(A\tilde{A}_k^{-1})^{1/2} = (A\tilde{A}_k^{-1})^{1/2}\Pi_k = \Pi_k$.
			
			The case $s\in[1,2]$ follows similarly, using instead the operator convexity of $x\mapsto x^s$, $s\in[1,2]$.
		\end{proof}
		
		Note that the BLP inequality implies that, for $A,B\geq 0$,
		\begin{align} \label{eq:BLP_trace}
			\Tr\left[ A^t A^s B^s \right]
			\leq 
			\Tr\left[ A^t \left(A^{1/2} B A^{1/2}\right)^s \right] , \quad \forall \, s\in[0,1],\, t\geq 0.
		\end{align}
		We have another lemma.
		\begin{lemm} \label{lemm:2}
			For $A>0$ and $B\geq 0$,
			\[
			\Tr\left[ A^t \left(A^{1/2} B A^{1/2}\right)^s \right]
			\leq 
			\Tr\left[ A^t A^s B^s \right] , \quad \forall \, s\in[0,1],\, t\leq -s.
			\]	
		\end{lemm}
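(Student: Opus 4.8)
The plan is to deduce Lemma~\ref{lemm:2} from the BLP inequality \eqref{eq:BLP} by means of the order‑reversing substitution $A\mapsto A^{-1}$, $B\mapsto A^{1/2}BA^{1/2}$, which is available because $A>0$. Set $\tilde t:=-t-s$; the hypothesis $t\leq -s$ is exactly $\tilde t\geq 0$. Applying \eqref{eq:BLP} with exponent $\tilde t$ and with $A^{-1}$, $A^{1/2}BA^{1/2}$ in the roles of $A$, $B$, and computing the three exponent blocks,
\[
(A^{-1})^{\tilde t+s}=(A^{-1})^{-t}=A^{t},\qquad \big((A^{-1})^{1/2}(A^{1/2}BA^{1/2})(A^{-1})^{1/2}\big)^{s}=B^{s},\qquad (A^{-1})^{\tilde t}=A^{t+s},
\]
the log‑majorization \eqref{eq:BLP} becomes
\[
A^{t}\left(A^{1/2}BA^{1/2}\right)^{s}\;\prec_{\log}\;A^{t+s}B^{s}.
\]

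To conclude I would take traces of both sides. Each of $A^{t}(A^{1/2}BA^{1/2})^{s}$ and $A^{t+s}B^{s}$ is similar to a positive semi‑definite matrix (e.g.\ $A^{t+s}B^{s}\sim A^{(t+s)/2}B^{s}A^{(t+s)/2}$), so it has nonnegative eigenvalues and its trace equals the sum of those eigenvalues; hence the standard chain ``log‑majorization $\Rightarrow$ weak log‑majorization $\Rightarrow$ weak majorization $\Rightarrow$ inequality of sums'' gives $\Tr[A^{t}(A^{1/2}BA^{1/2})^{s}]\leq \Tr[A^{t+s}B^{s}]=\Tr[A^{t}A^{s}B^{s}]$, which is the assertion.

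I do not expect a genuine obstacle here; there are only two small boundary matters. The inequality \eqref{eq:BLP} is stated for $t>0$, while here $\tilde t$ may be $0$ (the case $t=-s$); but both sides of \eqref{eq:BLP} are continuous in $t$ down to $t=0$, where \eqref{eq:BLP} is precisely the reversed Araki--Lieb--Thirrling log‑majorization \eqref{eq:ALT_log} with $r=s\in[0,1]$, so nothing is lost. The case $s=0$ is immediate when $B>0$ and follows in general by replacing $B$ with $B+\eps I$ and letting $\eps\downarrow 0$. The only point requiring care is the exponent bookkeeping in the substitution and the requirement $A>0$ needed to form $A^{-1}$, both of which are guaranteed throughout this section. (Alternatively, Lemma~\ref{lemm:2} is the trace consequence of the reversed GBLP log‑majorization \eqref{eq:GBLP} taken with $p=1$, $q=s$, $r=t$, using the branch $0\le q<p$, $-r\ge q$, together with the trivial equality at $s=1$; the route via \eqref{eq:BLP} above is the most economical.)
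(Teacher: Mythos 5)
Your proposal is correct and is essentially the paper's own argument: the paper also derives Lemma~\ref{lemm:2} from the BLP inequality by substituting $B\mapsto A^{-1/2}BA^{-1/2}$ and then $A\mapsto A^{-1}$ (the composite of which is exactly your single substitution $A\mapsto A^{-1}$, $B\mapsto A^{1/2}BA^{1/2}$), the only cosmetic difference being that the paper applies the substitutions directly to the trace form \eqref{eq:BLP_trace} rather than to the log-majorization \eqref{eq:BLP} followed by the standard passage from log-majorization to a trace inequality. Your exponent bookkeeping and the boundary remarks ($\tilde t=0$, $s=0$) are all fine.
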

		
		\begin{proof}
			Substitute $B$ by $A^{-1/2}BA^{-1/2}$ to \eqref{eq:BLP_trace}.
			Then
			\[
			\Tr\left[ A^t A^s \left(A^{-1/2} B A^{-1/2}\right)^s \right]
			\leq 
			\Tr\left[ A^t B^s \right], \quad \forall \, s\in[0,1],\, t\geq 0.
			\]
			Next, substitute $A$ by $A^{-1}$, then
			\[
			\Tr\left[ A^{-t-s} \left(A^{1/2} B A^{1/2}\right)^s \right]
			\leq 
			\Tr\left[ A^{-t} B^s \right], \quad \forall \, s\in[0,1],\; t\geq 0.
			\]
			Or, 
			\[
			\Tr\left[ A^t \left(A^{1/2} B A^{1/2}\right)^s \right]
			\leq 
			\Tr\left[ A^t A^s B^s \right] , \quad \forall \, s\in[0,1],\; t\leq -s.\qedhere
			\]
		\end{proof}
		
		\begin{proof}[Proof of Proposition~\ref{prop:order}]
			Combining the two different conditions regarding $t$ from \eqref{eq:BLP_trace} and Lemma~\ref{lemm:2}, we conclude that 
			for $f(x)=x^t$, $t\geq 0$ or $f(x)=-x^t$, $t\leq-s$, we have
			\begin{align} \label{eq:main_proof1}
				\Tr\left[ f(A) A^s B^s \right]
				\leq 
				\Tr\left[ f(A) \left(A^{1/2} B A^{1/2}\right)^s \right], \quad \forall \, s\in[0,1].
			\end{align}
			
			Notice that $\Tr[\Pi_kA^s B^s]=\Tr[\Pi_k\tilde{A}_k^s B^s]$ for $\Pi_k$ introduced in Lemma~\ref{lemm:1}.
			Hence, to complete the proof of Proposition~\ref{prop:order}, by Lemma~\ref{lemm:1} we only need to show that
			\begin{align} \label{eq:main_proof2}
				\Tr\left[ \Pi_k \tilde{A}_k^s B^s \right]
				\leq 
				\Tr\left[ \Pi_k \left(\tilde{A}_k^{1/2} B \tilde{A}_k^{1/2}\right)^s \right] , \quad \forall \, s\in[0,1].
			\end{align}
			To show this, we first introduce $\tilde{A}_{k,\varepsilon}=\Pi_k A+(\lambda_k -\varepsilon)\Pi_k^{\mathrm{c}} $ as an approximation of $\tilde{A}_k$, for $\varepsilon\in (0,\lambda_k)$, where $\lambda_k$ is the $k$-th largest eigenvalue of $A$.
			
			Define 
			\[
			g_t(x)=1-\left(\frac{x}{\lambda_k -\varepsilon}\right)^t,  \quad \forall  \, t\leq -s,
			\]
			then, \eqref{eq:main_proof1} implies that
			\[
			\Tr\left[ g_t(\Tilde{A}_{k,\varepsilon}) \Tilde{A}_{k,\varepsilon}^s B^s \right]
			\leq 
			\Tr\left[ g_t(\Tilde{A}_{k,\varepsilon}) \left(\tilde{A}_{k,\varepsilon}^{1/2} B \tilde{A}_{k,\varepsilon}^{1/2} \right)^s \right] , \quad \forall \, s\in[0,1],\, t\leq -s,
			\]
			noting that the constant function $1$ is just $f(x)=x^t$ at $t=0$.
			
			Let $t\to -\infty$, then $g_t(\tilde{A}_{k,\varepsilon})\to \Pi_k$, and hence \[
			\Tr\left[ \Pi_k \tilde{A}_{k,\varepsilon}^s B^s \right]
			\leq 
			\Tr\left[ \Pi_k \left(\tilde{A}_{k,\varepsilon}^{1/2} B \tilde{A}_{k,\varepsilon}^{1/2}\right)^s \right] , \quad \forall \, s\in[0,1].
			\]
			Finally, by letting $\varepsilon\to 0$, we obtain
			\[
			\Tr\left[ \Pi_k \Tilde{A}_{k}^s B^s \right]
			\leq 
			\Tr\left[ \Pi_k \left(\tilde{A}_{k}^{1/2} B \tilde{A}_{k}^{1/2} \right)^s \right] , \quad \forall \, s\in[0,1].
			\]
			
			In conclusion, for all $s\in [0,1]$, we have
			\[
			\Tr\left[ \Pi_k A^s B^s \right]
			=
			\Tr\left[ \Pi_k \tilde{A}_{k}^s B^s \right]
			\leq 
			\Tr\left[ \Pi_k \left(\tilde{A}_{k}^{1/2} B \tilde{A}_{k}^{1/2}\right)^s \right] 
			\leq
			\Tr\left[ \Pi_k \left(A^{1/2} B A^{1/2} \right)^s \right],
			\]
			which completes the proof of Proposition~\ref{prop:order}.
		\end{proof}
		
		Now we are in a position to prove our main results.
		
		\begin{theo} \label{theo:main_direct}
			Let $A,B\geq 0$.
			For any nonnegative and nondecreasing function $f$ on an interval $\mathcal{J}$, where $\spec(A)\subseteq \mathcal{J}$, we have 
			\[
			\Tr\left[ f(A) A^s B^s \right]
			\leq 
			\Tr\left[ f(A) \left(A^{1/2} B A^{1/2} \right)^s \right], \quad \forall \, s\in[0,1].
			\]
		\end{theo}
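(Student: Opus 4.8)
The plan is to reduce the statement for a general nonnegative nondecreasing $f$ to the projection case already settled in Proposition~\ref{prop:order}, via a spectral/layer-cake decomposition of $f(A)$. Write $A=\sum_{i=1}^n\lambda_i P_i$ with $\lambda_1\geq\cdots\geq\lambda_n\geq 0$ and $\Pi_k=\sum_{i=1}^k P_i$ as before. Since $f$ is nondecreasing, $f(\lambda_1)\geq f(\lambda_2)\geq\cdots\geq f(\lambda_n)\geq 0$, so we may telescope
\[
f(A)=\sum_{i=1}^n f(\lambda_i)P_i
= \sum_{k=1}^{n} \big(f(\lambda_k)-f(\lambda_{k+1})\big)\,\Pi_k,
\]
with the convention $f(\lambda_{n+1}):=0$. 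Each coefficient $f(\lambda_k)-f(\lambda_{k+1})$ is nonnegative, and $\Pi_n=I$ absorbs the "constant part" $f(\lambda_n)I$.

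**Next I would** apply Proposition~\ref{prop:order} to each $\Pi_k$: for every $k$ and every $s\in[0,1]$,
\[
\Tr\!\left[\Pi_k A^s B^s\right]\leq \Tr\!\left[\Pi_k\big(A^{1/2}BA^{1/2}\big)^s\right].
\]
Multiplying the $k$-th inequality by the nonnegative scalar $f(\lambda_k)-f(\lambda_{k+1})$ and summing over $k=1,\dots,n$, the left-hand side collapses to $\Tr[f(A)A^sB^s]$ and the right-hand side to $\Tr[f(A)(A^{1/2}BA^{1/2})^s]$ by linearity of the trace, which is exactly the claimed inequality. As in the rest of the section, one first treats $A>0$ and then passes to $A\geq 0$ by continuity (approximating $A$ by $A+\delta I$ and letting $\delta\to 0$, using that $f$, being monotone on $\mathcal{J}$, extends to a monotone function on a neighbourhood and that both sides are continuous in $A$).

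**The only real subtlety** is making sure the telescoping decomposition is valid and the coefficients are genuinely nonnegative — this is where monotonicity of $f$ is used, and it is the single hypothesis that cannot be dropped (the counterexample in the Remark after Proposition~\ref{prop:order} shows that without the "prefix-projection" structure, i.e.\ without monotonicity aligning $f(A)$ with the $\Pi_k$'s, the inequality fails). A minor point: if eigenvalues coincide, say $\lambda_k=\lambda_{k+1}$, the corresponding coefficient simply vanishes and the term drops out, so no genuine multiplicity issue arises; and if $f$ is only defined on $\mathcal{J}\supseteq\spec(A)$, the values $f(\lambda_i)$ are all we ever use. Everything else is linearity of the trace. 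I do not expect any hard estimate here — the weight of the argument has already been carried by Lemmas~\ref{lemm:1} and~\ref{lemm:2} and Proposition~\ref{prop:order}; Theorem~\ref{theo:main_direct} is the clean "integrate against a positive measure" corollary.
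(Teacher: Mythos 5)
Your proof is correct and is essentially identical to the paper's: the paper likewise writes $f(A)=\sum_k c_k\Pi_k$ with nonnegative coefficients $c_k$ (which are exactly your telescoped differences $f(\lambda_k)-f(\lambda_{k+1})$, with $f(\lambda_{n+1}):=0$) and then invokes Proposition~\ref{prop:order} term by term. Your version merely makes explicit the choice of coefficients and the limiting argument for $A\geq 0$, both of which the paper leaves implicit.
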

		\begin{proof}
			Since $f$ is nonnegative and nondecreasing, we can choose some appropriate nonnegative coefficients $c_k$ for each $k=1,\cdots,n$, such that $f(A)=\sum_{k=1}^n c_k\Pi_k$. Then, the claim follows from Proposition~\ref{prop:order}.
		\end{proof}

		As above, we may also apply substitutions to obtain another direction of the same inequality, but with a different condition.
		\begin{prop} \label{prop:main_converse}
			Let $A,B\geq 0$.
			For any $s\in[0,1]$ and any function $g$ on an interval $\mathcal{J}\subseteq [0,\infty)$, where $\spec(A)\subseteq \mathcal{J}$, such that $x\mapsto x^sg(x)$ is nonnegative and nonincreasing, then 
			\begin{align*}
				\Tr\left[ g(A) \left(A^{1/2} B A^{1/2} \right)^s \right]
				\leq 
				\Tr\left[ g(A) A^s B^s \right].
			\end{align*}
		\end{prop}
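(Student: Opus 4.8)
The plan is to deduce Proposition~\ref{prop:main_converse} from Theorem~\ref{theo:main_direct} by the same substitution trick already used to pass from \eqref{eq:BLP_trace} to Lemma~\ref{lemm:2}. Write $h(x) = x^s g(x)$, so that by hypothesis $h$ is nonnegative and nonincreasing on $\mathcal{J}$. First I would replace $B$ by $A^{-1/2} B A^{-1/2}$ in the conclusion of Theorem~\ref{theo:main_direct}. However, a nonnegative nonincreasing function is not directly handled by Theorem~\ref{theo:main_direct}, which asks for nondecreasing $f$; the resolution is to apply the \emph{inversion} substitution $A \mapsto A^{-1}$ as well, since $x \mapsto h(1/x)$ becomes nondecreasing when $h$ is nonincreasing. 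So the core step is: apply Theorem~\ref{theo:main_direct} with the nondecreasing nonnegative function $\tilde{f}(x) := h(1/x)$, with $A$ replaced by $A^{-1}$, and with $B$ replaced by $A^{1/2} B A^{1/2}$.

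Concretely, Theorem~\ref{theo:main_direct} applied to $\tilde f$, to the matrix $A^{-1}$, and to an arbitrary $C \geq 0$ gives
\begin{align*}
\Tr\left[ \tilde f(A^{-1}) A^{-s} C^s \right]
\leq
\Tr\left[ \tilde f(A^{-1}) \left( A^{-1/2} C A^{-1/2} \right)^s \right].
\end{align*}
Now $\tilde f(A^{-1}) = h(A) = A^s g(A)$. Choosing $C = A^{1/2} B A^{1/2}$, the right-hand side becomes $\Tr[ A^s g(A) B^s ]$ after noting $A^{-1/2} C A^{-1/2} = B$, i.e.\ it equals $\Tr[ g(A) A^s B^s ]$. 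The left-hand side becomes $\Tr[ A^s g(A) A^{-s} (A^{1/2} B A^{1/2})^s ] = \Tr[ g(A) (A^{1/2} B A^{1/2})^s ]$, using that $A^s$, $A^{-s}$, and $g(A)$ all commute. This yields exactly the claimed inequality.

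The only genuine obstacles are bookkeeping rather than conceptual. One must check that $\tilde f(x) = h(1/x)$ is well-defined, nonnegative, and nondecreasing on the spectrum of $A^{-1}$ — this is immediate from $h$ being nonnegative nonincreasing on $\mathcal{J} \supseteq \spec(A)$, so that $\spec(A^{-1}) \subseteq \{1/x : x \in \mathcal{J}, x > 0\}$ and $\tilde f$ is nondecreasing there. As usual one first treats $A > 0$ so that $A^{-1}$ makes sense, and the case $A \geq 0$ follows by the limiting argument already invoked in the paper (or by restricting to the support of $A$, on which the statement only depends through $g$ composed with the positive part of the spectrum). I do not expect any analytic difficulty beyond this; the proposition is essentially a corollary of Theorem~\ref{theo:main_direct} under the $A \mapsto A^{-1}$, $B \mapsto A^{1/2} B A^{1/2}$ change of variables.
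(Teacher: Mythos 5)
Your proposal is correct and coincides with the paper's own proof: your $\tilde f(x)=h(1/x)=x^{-s}g(x^{-1})$ is exactly the function $f$ the paper feeds into Theorem~\ref{theo:main_direct}, with the same substitutions $A\mapsto A^{-1}$ and $B\mapsto A^{1/2}BA^{1/2}$ and the same limiting argument for $A\geq 0$. No further changes are needed.
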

		
		\begin{proof}
			Again, we assume $A>0$, and extend the result to $A\geq 0$ via a limit.
			Set $f(x)=x^{-s}g(x^{-1})$, which is nonnegative and nondecreasing. Since the domain of $f$ includes the spectrum of $A^{-1}$, by Theorem~\ref{theo:main_direct},
			\begin{align}
				\Tr\left[g(A)\left(A^{{1}/{2}}BA^{{1}/{2}}\right)^s\right]
				&=
				\Tr\left[ f\left(A^{-1}\right) \left(A^{-1}\right)^s \left(A^{{1}/{2}}BA^{{1}/{2}}\right)^s \right]
				\\
				&\leq 
				\Tr\left[ f\left(A^{-1}\right) \left(A^{-1/2} \cdot A^{{1}/{2}}BA^{{1}/{2}}\cdot A^{-1/2}\right)^s \right]
				\\
				&=\Tr\left[g(A)A^sB^s\right].\qedhere
			\end{align}
		\end{proof}
		
		\section{A Conjecture for \(s > 1\)} \label{sec:conjecture}
		
		Theorem~\ref{theo:main_direct} shows the trace inequality for all $s\in[0,1]$.
		Inspired by the reverse BLP inequality \cite{MNF08, FNT07, MF09}, we conjecture the following reverse direction.
		
		\begin{conj} \label{conj:s>1}
			For any nonnegative and nondecreasing function $f$ on an interval $\mathcal{J}$, where $\spec(A)\subseteq \mathcal{J}$, we have 
			\[
			\Tr\left[ f(A) A^s B^s \right]
			\geq 
			\Tr\left[ f(A) \left(A^{1/2} B A^{1/2} \right)^s \right], \quad \forall \, s \geq 1.
			\]
		\end{conj}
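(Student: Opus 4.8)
The plan is to mirror the proof of Theorem~\ref{theo:main_direct}, with every inequality reversed. Exactly as there, a nonnegative nondecreasing $f$ with $\spec(A)\subseteq\mathcal{J}$ satisfies $f(A)=\sum_{k=1}^n c_k\Pi_k$ with $c_k\ge 0$, so Conjecture~\ref{conj:s>1} is equivalent to the single-projection bound
\[
\Tr\big[\Pi_k A^sB^s\big]\ \ge\ \Tr\big[\Pi_k\,(A^{1/2}BA^{1/2})^s\big],\qquad s>1,\ 1\le k\le n.
\]
Since $\Tr[\Pi_k A^sB^s]=\Tr[\Pi_k\tilde{A}_k^sB^s]$, and since Lemma~\ref{lemm:1} gives $\Tr[\Pi_k(A^{1/2}BA^{1/2})^s]\le\Tr[\Pi_k(\tilde{A}_k^{1/2}B\tilde{A}_k^{1/2})^s]$ for $s\in[1,2]$, it suffices — at least on the range $s\in(1,2]$ — to prove
\[
\Tr\big[\Pi_k\tilde{A}_k^sB^s\big]\ \ge\ \Tr\big[\Pi_k\,(\tilde{A}_k^{1/2}B\tilde{A}_k^{1/2})^s\big].
\]

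I would establish this by re-running the limiting argument of Proposition~\ref{prop:order} in reversed form. Replace $\tilde{A}_k$ by $\tilde{A}_{k,\varepsilon}=\Pi_k A+(\lambda_k-\varepsilon)\Pi_k^{\mathrm{c}}$, set $c=\lambda_k-\varepsilon$, and test the desired inequality against $h_\alpha(\tilde{A}_{k,\varepsilon})$, where $h_\alpha(x)=1-c^{\alpha}x^{-\alpha}\ge 0$ on $[c,\infty)\supseteq\spec(\tilde{A}_{k,\varepsilon})$; then $h_\alpha(\tilde{A}_{k,\varepsilon})\to\Pi_k$ as $\alpha\to\infty$, and afterwards $\varepsilon\to 0$. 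Expanding $h_\alpha$, the constant term is handled by the Araki--Lieb--Thirring inequality \eqref{eq:ALT} in the form $\Tr[C^sB^s]\ge\Tr[(C^{1/2}BC^{1/2})^s]$, $s\ge 1$, and the $x^{-\alpha}$ term needs a \emph{reverse BLP trace inequality}
\[
\Tr\big[C^{t+s}B^s\big]\ \le\ \Tr\big[C^{t}(C^{1/2}BC^{1/2})^s\big],\qquad s\ge 1,\ t\le -s,
\]
applied with $t=-\alpha\to-\infty$. Equivalently (substitute $B\mapsto C^{-1/2}BC^{-1/2}$ and then $C\mapsto C^{-1}$), one needs the reverse of \eqref{eq:BLP_trace}, namely $\Tr[C^{\tau}(C^{1/2}BC^{1/2})^s]\le\Tr[C^{\tau+s}B^s]$, for \emph{all} $\tau\ge 0$ and all $s\ge 1$.

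This reverse BLP trace inequality is the main obstacle. Feeding $B\mapsto A^{-1/2}BA^{-1/2}$, $A\mapsto A^{-1}$ into the reversed part of Theorem~\hyperlink{theo:GBLP}{GBLP} (first case, $p=1$, $q=s$, $0\le r\le 1$) yields it only on the bounded window $\tau\in[0,1]$ — far too little to let $\alpha\to\infty$ — and that window cannot be patched by a combination of powers: any admissible test function $g=p-q$, with $p$ a nonnegative combination of $x^{t}$ ($t\in[0,1]$) and $q$ a nonnegative combination of $x^{t}$ ($t\in[-1-s,-s]$), is necessarily concave and nondecreasing, hence is forced to equal the constant $1$ on any interval $[\lambda_k,\lambda_{k-1}]$ with $\lambda_{k-1}>\lambda_k$, which is incompatible with $g$ agreeing with $\Pi_k$ on $\spec(\tilde{A}_{k,\varepsilon})$ once $k\ge 2$. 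One therefore genuinely needs the full-range statement; it is plausible (it is trivial at the constant-weight endpoint by \eqref{eq:ALT}, it passes the $2\times 2$ check, and it involves only a \emph{power} weight $C^{\tau}$, so the antisymmetric-tensor-power plus Furuta-inequality method used for Theorem~\hyperlink{theo:GBLP}{GBLP} may still go through), but it does not follow formally from the results recorded here, and supplying it is where the real work lies. A second, independent gap is $s>2$: there $x\mapsto x^{s}$ is no longer operator convex, so Lemma~\ref{lemm:1} is unavailable for the step passing from $\tilde{A}_k$ back to $A$ — and the naive substitute $A^{1/2}BA^{1/2}\le\tilde{A}_k^{1/2}B\tilde{A}_k^{1/2}$ is false, its associated Schur multiplier failing to be positive semidefinite — so that step would have to be replaced by a new argument.
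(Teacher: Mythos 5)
This statement is a \emph{conjecture} in the paper: the authors do not prove it, and your proposal, correctly, does not either. What you have produced is essentially the same conditional reduction the authors sketch in Section~\ref{sec:conjecture}, together with an accurate diagnosis of why it cannot be closed from the results on record. The missing ingredient you isolate --- a reverse BLP trace inequality $\Tr[C^{\tau+s}B^s]\ge\Tr[C^{\tau}(C^{1/2}BC^{1/2})^s]$ for \emph{all} $\tau\ge 0$ and $s\ge 1$ --- is precisely the paper's Conjecture~\ref{conj:GBLP_reverse} (take $p=1$, $q=s$, $r=\tau$), i.e.\ the relaxation of the condition $0\le r\le p\le q$ in Theorem~\hyperlink{theo:GBLP}{GBLP} to $0\le p\le q$; and your second obstruction, that $x\mapsto x^s$ loses operator convexity for $s>2$ so Lemma~\ref{lemm:1} no longer transports the bound from $\tilde A_k$ back to $A$, is exactly why the authors only claim the conditional argument for $s\in[1,2]$. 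Your additional observation --- that the window $\tau\in[0,1]$ available from the recorded reverse GBLP cannot be patched by nonnegative combinations of admissible powers, by a concavity argument --- is a genuine refinement not in the paper, and it correctly rules out the most obvious attempt to salvage the limiting step.

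The one concrete thing your analysis misses is that the paper \emph{does} settle the special case $s=2$ unconditionally (Proposition~\ref{prop:main_s=2} via Lemma~\ref{lemm:order_s=2}): there the reverse of \eqref{eq:main_proof2} is obtained not from any BLP-type input but by expanding $\Tr[\Pi_k(\tilde A_k^{1/2}B\tilde A_k^{1/2})^2]$ into a block on $\operatorname{ran}\Pi_k$ plus a cross term, estimating the cross term with $\lambda_k\Pi_k\le\Pi_k\tilde A_k\Pi_k$, and applying the Araki--Lieb--Thirring inequality \eqref{eq:ALT} only to the compressed matrices $\Pi_k\tilde A_k\Pi_k$ and $\Pi_kB\Pi_k$. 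Your route, which funnels everything through the (unproven) full-range reverse BLP inequality, does not recover even this case; so if you want to make partial progress on the conjecture, the direct algebraic expansion at integer $s$ is the avenue the authors found tractable, while the general $s>1$ statement remains open for exactly the reasons you state.
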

		
		Note that $x\mapsto x^s$ is operator convex for $s\in[1,2]$. We may therefore adapt the previous technique to prove Conjecture~\ref{conj:s>1} for $s\in[1,2]$, provided that the condition $0\leq r\leq p \leq q$, $p>0$ in the reverse direction of Theorem~\hyperlink{theo:GBLP}{GBLP} can be relaxed to $0< p \leq q$, $r\ge 0$.
		Therefore, we conjecture the following reverse log-majorization, which complements \eqref{eq:GBLP}.
		\begin{conj} \label{conj:GBLP_reverse}
			For $A,B\geq 0$,
			\begin{align}
				A^{r+q} B^q \succ_{\log}  A^r \left( A^{p/2} B^p A^{p/2} \right)^{q/p} , \quad  0 < p \leq q, \; r\geq 0.
			\end{align}
		\end{conj}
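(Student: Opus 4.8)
The plan is to deduce Conjecture~\ref{conj:GBLP_reverse} from the stronger \emph{reverse log-majorization}
\[
A^{r+q}B^{q}\;\succ_{\log}\;A^{r}\left(A^{p/2}B^{p}A^{p/2}\right)^{q/p},\qquad 0\le p\le q,\ r\ge 0,
\]
which implies the trace inequality since both sides are matrices with nonnegative spectrum. We take $A,B>0$ and $0<p\le q$, the remaining cases following by continuity. Replacing $A^{p}$ and $B^{p}$ by fresh variables and writing $s:=q/p\ge1$, $t:=r/p\ge0$, the claim becomes $A^{t+s}B^{s}\succ_{\log}A^{t}(A^{1/2}BA^{1/2})^{s}$ for $s\ge1$, $t\ge0$ — i.e.\ the reverse of the BLP relation \eqref{eq:BLP}, equivalently the power-function case $f(x)=x^{t}$ of the reverse of Theorem~\ref{theo:main_direct}. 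For $t\in[0,1]$, that is $r\le p$, this is precisely the reverse direction of Theorem~\hyperlink{theo:GBLP}{GBLP}, so the entire content lies in the range $r>p$.

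By the antisymmetric tensor power trick, applied as in the proof of Theorem~\hyperlink{theo:BLP}{BLP} (the exterior power $\Lambda^{k}$ is multiplicative and commutes with the functional calculus, so $\prod_{i\le k}\lambda_{i}(M)=\lambda_{1}(\Lambda^{k}M)$ for $M$ with positive spectrum), together with the elementary equality $\det\!\left(A^{r}(A^{p/2}B^{p}A^{p/2})^{q/p}\right)=(\det A)^{r+q}(\det B)^{q}=\det(A^{r+q}B^{q})$, it suffices to prove the single largest-eigenvalue inequality $\lambda_{1}(A^{r+q}B^{q})\ge\lambda_{1}\!\left(A^{r}(A^{p/2}B^{p}A^{p/2})^{q/p}\right)$. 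Passing to the positive matrices $A^{(r+q)/2}B^{q}A^{(r+q)/2}$ and $A^{r/2}(A^{p/2}B^{p}A^{p/2})^{q/p}A^{r/2}$ with the same spectra, and normalising (by scaling $B$) so that $A^{(r+q)/2}B^{q}A^{(r+q)/2}\le I$, i.e.\ $B^{q}\le A^{-(r+q)}$, this becomes the operator implication
\[
B^{q}\le A^{-(r+q)}\quad\Longrightarrow\quad\left(A^{p/2}B^{p}A^{p/2}\right)^{q/p}\le A^{-r}.
\]
Because the outer exponent $q/p$ exceeds $1$, getting from the hypothesis to the conclusion needs the Furuta inequality, not mere operator monotonicity. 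The Furuta inequality — the engine of all BLP/GBLP-type results — delivers exactly this under the extra constraint $r\le p$ (which is how the reverse direction of Theorem~\hyperlink{theo:GBLP}{GBLP} is obtained); extending it to $r>p$ is the crux.

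As a sanity check and a guide, the endpoint $s=2$ (that is, $q=2p$) already holds for \emph{all} $r\ge0$: by cyclicity $\Tr\!\left[A^{r}(A^{p/2}B^{p}A^{p/2})^{2}\right]=\Tr\!\left[A^{r+p}B^{p}A^{p}B^{p}\right]$, so the claim reads $\Tr\!\left[A^{r+p}\left(A^{p}B^{2p}-B^{p}A^{p}B^{p}\right)\right]\ge0$, which in an eigenbasis of $A$ (eigenvalues $\lambda_{i}$, and $C:=B^{p}=(c_{ij})$ Hermitian) equals $\sum_{i<j}(\lambda_{i}^{p}-\lambda_{j}^{p})(\lambda_{i}^{r+p}-\lambda_{j}^{r+p})\,|c_{ij}|^{2}\ge0$, since $x\mapsto x^{p}$ and $x\mapsto x^{r+p}$ are increasing. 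This, with the operator convexity of $x\mapsto x^{s}$ on $[1,2]$ (which reverses Lemma~\ref{lemm:1}), makes the range $s\in[1,2]$ look tractable once the Furuta extension above is in hand; to pass from $s\in[1,2]$ to $s>2$ one would then iterate, using the log-monotonicity of $p\mapsto(A^{p/2}B^{p}A^{p/2})^{1/p}$ in the log-majorization order \cite{AH94}, or interpolate (Hadamard three-line / Stein) along $z\mapsto A^{r/2}(A^{p/2}B^{p}A^{p/2})^{qz/(2p)}A^{r/2}$, controlling the two boundary lines by the $s\le2$ case and by \eqref{eq:ALT}. Once the reverse log-majorization holds, the general reverse of Theorem~\ref{theo:main_direct} on $[1,2]$ follows by rerunning the proof of Proposition~\ref{prop:order} with every inequality reversed, combining it with the reversed Lemma~\ref{lemm:1} and the Araki--Lieb--Thirrling inequality \eqref{eq:ALT}.

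The step I expect to be the genuine obstacle is the removal of the constraint $r\le p$: every known proof of a BLP/GBLP-type (log-)majorization factors through the Furuta inequality, whose hypotheses cap the outer exponent, and in the reverse regime that cap is exactly $r\le p$. It is even conceivable that no operator inequality of the required generality holds, so that the reverse log-majorization fails for $r>p$ while the weaker trace inequality of Conjecture~\ref{conj:GBLP_reverse} survives; in that case the log-majorization route must be replaced by a direct trace-level argument — for instance via the representation $M^{q/p}=c_{q/p}\int_{0}^{\infty}M^{2}(M+\mu)^{-1}\mu^{q/p-2}\,\mathrm{d}\mu$ for $q/p\in(1,2)$ combined with the Araki--Lieb--Thirrling inequality, iterated to reach $q/p>2$ — which would be the most delicate part.
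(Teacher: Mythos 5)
This statement is an open conjecture in the paper: the authors offer no proof, and explicitly leave it (together with Conjecture~\ref{conj:s>1}) for future work, supported only by numerical evidence. Your proposal does not close it either, and you say so yourself. The concrete gap is the operator implication
\[
B^{q}\le A^{-(r+q)}\ \Longrightarrow\ \left(A^{p/2}B^{p}A^{p/2}\right)^{q/p}\le A^{-r}
\]
in the regime $r>p$, which is exactly where the conjecture goes beyond the reverse direction of Theorem~\hyperlink{theo:GBLP}{GBLP}. You correctly observe that the Furuta inequality only delivers this for $r\le p$, and you offer no substitute argument for $r>p$ -- only candidate strategies (iteration via the Ando--Hiai log-majorization, complex interpolation, an integral representation of $x^{q/p}$), none of which is carried out. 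Moreover, your proposed route through a \emph{reverse log-majorization} may be strictly stronger than what is claimed: the conjecture as stated is only a trace inequality, and you yourself note it is conceivable the log-majorization fails for $r>p$ while the trace inequality survives. A proof attempt that hinges on an unestablished (and possibly false) strengthening is not a proof.

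What you do establish is the special case $q=2p$: the Chebyshev-sum computation
$\Tr\bigl[A^{r+p}\bigl(A^{p}B^{2p}-B^{p}A^{p}B^{p}\bigr)\bigr]=\tfrac12\sum_{i,j}(\lambda_{i}^{p}-\lambda_{j}^{p})(\lambda_{i}^{r+p}-\lambda_{j}^{r+p})\,|c_{ij}|^{2}\ge0$
is correct and is a genuine partial result, consistent in spirit with the paper's own $s=2$ result (Proposition~\ref{prop:main_s=2} and Lemma~\ref{lemm:order_s=2}, which handle general nondecreasing $f$ by a different, projection-based argument using \eqref{eq:ALT}). Your reductions (continuity to $A,B>0$, rescaling to $s=q/p$, the antisymmetric tensor power step, and the determinant identity) are all sound as far as they go. But the heart of the conjecture -- the full range $0\le p\le q$, $r\ge0$ -- remains unproven in your write-up, just as it does in the paper.
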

		\noindent (To show Conjecture~\ref{conj:s>1} for $s\in[1,2]$, the trace inequality version of Conjecture~\ref{conj:GBLP_reverse} would be sufficient.)
		
		\medskip 
		
		In fact, we can directly prove Conjecture~\ref{conj:s>1} for a special case of $s=2$ without resorting to Conjecture~\ref{conj:GBLP_reverse}.
		\begin{prop} \label{prop:main_s=2}
			For any nonnegative and nondecreasing function $f$ on an interval $\mathcal{J}$, where $\spec(A)\subseteq \mathcal{J}$, we have 
			\[
			\Tr\left[ f(A) A^s B^s \right]
			\geq 
			\Tr\left[ f(A) \left(A^{1/2} B A^{1/2} \right)^s \right], \quad s = 2.
			\]
		\end{prop}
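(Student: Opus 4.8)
The plan is to reduce the statement to an elementary computation in an eigenbasis of $A$. First I would remove the ``sandwich'': since $f(A)$ commutes with $A^{1/2}$, cyclicity of the trace gives
\[
\Tr\left[ f(A) \left( A^{1/2} B A^{1/2} \right)^2 \right]
= \Tr\left[ A^{1/2} f(A) A^{1/2}\, B A B \right]
= \Tr\left[ f(A) A\, B A B \right],
\]
so the asserted inequality is equivalent to $\Tr\left[ f(A) A^2 B^2 \right] \geq \Tr\left[ f(A) A B A B \right]$, an inequality between ``polynomial'' expressions in $A$ and $B$ that requires no operator convexity.

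Next I would diagonalize $A = \sum_{i=1}^n a_i \proj{e_i}$ with $a_i \geq 0$ and set $b_{ij} := \bra{e_i} B \ket{e_j}$, so that $b_{ij} = \overline{b_{ji}}$. Expanding both sides in this basis,
\[
\Tr\left[ f(A) A^2 B^2 \right] = \sum_{i,j} f(a_i)\, a_i^2\, |b_{ij}|^2,
\qquad
\Tr\left[ f(A) A B A B \right] = \sum_{i,j} f(a_i)\, a_i a_j\, |b_{ij}|^2 ,
\]
so their difference is $\Phi := \sum_{i,j} f(a_i)\, a_i (a_i - a_j)\, |b_{ij}|^2$. Symmetrizing in $i \leftrightarrow j$ and using $|b_{ij}| = |b_{ji}|$ yields
\[
2\Phi = \sum_{i,j} \big( f(a_i) a_i - f(a_j) a_j \big)(a_i - a_j)\, |b_{ij}|^2 .
\]

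It then remains to observe that each summand is nonnegative. Since $\spec(A) \subseteq [0,\infty) \cap \mathcal{J}$ and $f$ is nonnegative and nondecreasing on $\mathcal{J}$, the map $t \mapsto t f(t)$ is nonnegative and nondecreasing on $\spec(A)$, being a product of two nonnegative nondecreasing quantities. Hence $f(a_i) a_i - f(a_j) a_j$ and $a_i - a_j$ always carry the same sign, every term in $2\Phi$ is $\geq 0$, and therefore $\Phi \geq 0$, which is the claim. (No limiting argument in $A$ is needed here: the identity for $\Phi$ is purely algebraic and already valid for $A \geq 0$. Alternatively one could first reduce to $f(A) = \Pi_k$ as in the proof of Theorem~\ref{theo:main_direct} and run the same symmetrization, splitting the double sum according to whether the indices lie inside or outside the top-$k$ block.)

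There is no real obstacle in this argument; the only point requiring care is recognizing that the relevant monotone object is $t \mapsto t f(t)$ rather than $f$ itself. This is also precisely the feature that confines the method to $s = 2$: the identity above hinges on the cross term of $\left( A^{1/2} B A^{1/2} \right)^2$ producing the bilinear weight $a_i a_j$, and for general $s > 1$ no such clean identity is available --- which is why Conjecture~\ref{conj:GBLP_reverse} (or its trace version) must be invoked for the range $s \in [1,2]$.
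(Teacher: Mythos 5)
Your proof is correct, and it takes a genuinely different and more elementary route than the paper's. The paper proves this proposition by decomposing $f(A)=\sum_k c_k \Pi_k$ into spectral-projection sums, passing from $A$ to the flattened operator $\tilde{A}_k=\Pi_k A+\lambda_k\Pi_k^{\mathrm{c}}$ via the reversed ($s\in[1,2]$) case of Lemma~\ref{lemm:1} (operator Jensen / operator convexity of $x\mapsto x^2$), and then handling $\tilde{A}_k$ with Lemma~\ref{lemm:order_s=2}, which invokes $\lambda_k\Pi_k\leq \Pi_k\tilde{A}_k\Pi_k$ together with the Araki--Lieb--Thirring inequality \eqref{eq:ALT}. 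You instead observe that the claim is equivalent to $\Tr[f(A)A^2B^2]\geq \Tr[f(A)ABAB]$ and verify this by expanding in an eigenbasis of $A$ and symmetrizing, reducing everything to the Chebyshev-type sign condition $\bigl(f(a_i)a_i-f(a_j)a_j\bigr)(a_i-a_j)\geq 0$, which holds because $t\mapsto tf(t)$ is nonnegative and nondecreasing on $\spec(A)\subseteq[0,\infty)$. Each step of your computation checks out (the cyclicity reduction, the expansions, the symmetrization identity, and the monotonicity of $t\mapsto tf(t)$ as a product of nonnegative nondecreasing functions). Your argument buys simplicity and slightly more generality: it needs no operator convexity, no ALT inequality, no $\Pi_k$ decomposition, and no limiting argument for $A\geq 0$; it also only uses that $B$ is Hermitian, not $B\geq 0$. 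What the paper's heavier machinery buys is structural uniformity with the $s\in[0,1]$ proof and a template that could extend to $s\in[1,2]$ if Conjecture~\ref{conj:GBLP_reverse} were established, whereas, as you correctly note, your identity is special to $s=2$.
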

		
		\begin{proof}
			Our claim is based on Lemma~\ref{lemm:1} and the reverse direction to \eqref{eq:main_proof2}, which is presented in the following Lemma~\ref{lemm:order_s=2}.
		\end{proof}
		
		\begin{lemm} \label{lemm:order_s=2}
			Let $A = \sum_{i=1}^n \lambda_i P_i$ be the spectral decomposition of positive semi-definite $A$ with $\lambda_1 \geq \lambda_2 \geq \cdots \geq \lambda_{n} \geq 0$,
			$\Pi_k := \sum_{i=1}^k P_i$,
			$\Pi_k^{\textnormal{c}} := I - \Pi_k$,
			and $\tilde{A}_k := \Pi_k A + \lambda_k \Pi_k^{\textnormal{c}}$.
			Then,
			\begin{align}
				\Tr\left[ \Pi_k \left(\tilde{A}_k^{1/2} B \tilde{A}_k^{1/2}\right)^s \right]
				\leq \Tr\left[ \Pi_k \tilde{A}_k^s B^s \right], \quad s = 2.
			\end{align}
		\end{lemm}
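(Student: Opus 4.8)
The plan is to use that $s=2$ turns the fractional power into an honest square, so that both sides become traces of products of four factors. Write $C:=\tilde A_k$ for short. Since $\big(C^{1/2}BC^{1/2}\big)^2 = C^{1/2}BCBC^{1/2}$, cyclicity of the trace together with $[C,\Pi_k]=0$ (both are functions compatible with the spectral projections $P_i$) gives
\[
\Tr\big[\Pi_k\,(C^{1/2}BC^{1/2})^2\big]=\Tr\big[C^{1/2}\Pi_k C^{1/2}\,BCB\big]=\Tr\big[\Pi_k\,CBCB\big],
\]
while $\Tr[\Pi_k\tilde A_k^{s}B^{s}]$ at $s=2$ is already $\Tr[\Pi_k C^2B^2]$. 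Hence the lemma is equivalent to $\Tr[\Pi_k C^2B^2]-\Tr[\Pi_k CBCB]\ge 0$.

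Next I would pass to the $2\times2$ block decomposition with respect to $\Pi_k$. Put $\cH_1=\operatorname{ran}\Pi_k$ and $\cH_2=\operatorname{ran}\Pi_k^{\mathrm c}$. By construction $C=C_1\oplus\lambda_k I_{\cH_2}$, where $C_1$ is the compression of $A$ to $\cH_1$, whose eigenvalues are $\lambda_1,\dots,\lambda_k$; in particular $C_1\ge\lambda_k I_{\cH_1}$. Write $B$ in blocks with diagonal blocks $B_{11}$ on $\cH_1$, $B_{22}$ on $\cH_2$, and off-diagonal block $B_{12}$ (with $B_{21}=B_{12}^*$). A direct computation of the $(1,1)$-blocks of $C^2B^2$ and of $CBCB$ yields
\[
\Tr[\Pi_k(C^2B^2-CBCB)]=\Big(\Tr_{\cH_1}\!\big[C_1^2B_{11}^2\big]-\Tr_{\cH_1}\!\big[(C_1B_{11})^2\big]\Big)+\Tr_{\cH_1}\!\big[(C_1^2-\lambda_kC_1)\,B_{12}B_{12}^*\big].
\]

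It then remains to see both summands are nonnegative. The first is the standard inequality $\Tr[X^2Y^2]\ge\Tr[XYXY]$ for Hermitian $X=C_1$, $Y=B_{11}$: since $[X,Y]$ is skew-Hermitian, $\Tr[X^2Y^2]-\Tr[XYXY]=\tfrac12\Tr\big[[X,Y][X,Y]^*\big]\ge0$. For the second summand, $C_1^2-\lambda_kC_1=C_1\,(C_1-\lambda_kI_{\cH_1})\ge0$ because $C_1$ and $C_1-\lambda_kI_{\cH_1}$ are commuting positive operators, and $B_{12}B_{12}^*\ge0$; the trace of the product of two positive semidefinite operators is nonnegative. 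Adding the two bounds closes the proof, and combining with Lemma~\ref{lemm:1} gives Proposition~\ref{prop:main_s=2}.

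The step I expect to be the real point is the cross term $\Tr_{\cH_1}[(C_1^2-\lambda_kC_1)B_{12}B_{12}^*]$: its positivity is exactly what the flattening $\tilde A_k=\sum_i\max\{\lambda_i,\lambda_k\}P_i$ is designed to guarantee — on $\cH_2$ the operator is the scalar $\lambda_k$, and on $\cH_1$ every eigenvalue dominates $\lambda_k$, so $C_1(C_1-\lambda_kI)\ge0$. Running the same computation with $A$ in place of $\tilde A_k$ would produce a factor $\big(A_1^2 - (\text{stuff})A_1\big)$ that need not be positive, which is precisely why the lemma is stated for $\tilde A_k$ and why one cannot bypass Lemma~\ref{lemm:1}.
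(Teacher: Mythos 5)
Your proof is correct and is essentially the paper's own argument in block-matrix notation: both split $\tilde A_k$ into its compression to $\operatorname{ran}\Pi_k$ plus $\lambda_k\Pi_k^{\mathrm c}$, control the diagonal term by $\Tr[(XY)^2]\le\Tr[X^2Y^2]$ (the paper cites Araki--Lieb--Thirring at $r=2$; your commutator identity is the standard proof of that case), and control the cross term by $\lambda_k\Pi_k\le\Pi_k\tilde A_k\Pi_k$, i.e.\ your $C_1\ge\lambda_k I$. Your closing remark correctly identifies why the flattening to $\tilde A_k$ is what makes the cross term nonnegative.
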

		\begin{proof}
			Using the fact that $\lambda_k \Pi_k \leq \Pi_k \tilde{A}_k \Pi_k$ by construction and the Araki--Lieb--Thirring trace inequality \eqref{eq:ALT}, we calculate
			\begin{align}
				\Tr\left[ \Pi_k \left(\tilde{A}_k^{1/2} B \tilde{A}_k^{1/2}\right)^2 \right]
				&= \Tr\left[ \Pi_k \tilde{A}_k \Pi_k  B { \tilde{A}_k} B \right]
				\\
				&= \Tr\left[ \Pi_k \tilde{A}_k \Pi_k  B {  \left(\Pi_k \tilde{A}_k \Pi_k + \lambda_k \Pi_k^{\text{c}} \right) } B \right]
				\\
				&= \Tr\left[ \Pi_k \tilde{A}_k \Pi_k  B { \Pi_k \tilde{A}_k \Pi_k} B \right]
				+ \Tr\left[ \Pi_k \tilde{A}_k \Pi_k  B { (\lambda_k \Pi_k^{\text{c}})} B \right]
				\\
				&= \Tr\left[ (\Pi_k \tilde{A}_k \Pi_k  \Pi_k B \Pi_k )^2 \right]
				+ \Tr\left[ \Pi_k \tilde{A}_k \Pi_k  B { (\lambda_k \Pi_k^{\text{c}})} B \right]
				\\
				&= \Tr\left[ (\Pi_k \tilde{A}_k \Pi_k  \Pi_k B \Pi_k )^2 \right]
				+ \Tr\left[ \Pi_k \tilde{A}_k \Pi_k  { \lambda_k \Pi_k }  B  {  \Pi_k^{\text{c}}} B \right]
				\\
				&\leq\Tr\left[ (\Pi_k \tilde{A}_k \Pi_k  \Pi_k B \Pi_k )^2 \right]
				+ \Tr\left[ \Pi_k \tilde{A}_k \Pi_k  {  \Pi_k \tilde{A}_k \Pi_k }  B  {  \Pi_k^{\text{c}}} B \right]
				\\
				&= \Tr\left[ (\Pi_k \tilde{A}_k \Pi_k  \Pi_k B \Pi_k )^2 \right]
				+ \Tr\left[ \Pi_k \tilde{A}_k^2 \Pi_k  B  {  \Pi_k^{\text{c}}} B \right]
				\\
				&\overset{\eqref{eq:ALT}}{\leq} \Tr\left[ (\Pi_k \tilde{A}_k \Pi_k )^2 (\Pi_k B \Pi_k )^2 \right]
				+ \Tr\left[ \Pi_k \tilde{A}_k^2 \Pi_k  B  {  \Pi_k^{\text{c}}} B \right]
				\\
				&= \Tr\left[ \Pi_k \tilde{A}_k^2 \Pi_k B \Pi_k B \right]
				+ \Tr\left[ \Pi_k \tilde{A}_k^2 \Pi_k  B  {  \Pi_k^{\text{c}}} B \right]
				\\
				&= \Tr\left[ \Pi_k \tilde{A}_k^2 \Pi_k  B  (\Pi_k + {  \Pi_k^{\text{c}}}) B \right]
				\\
				&= \Tr\left[ \Pi_k \tilde{A}_k^2 B^2 \right].\qedhere
			\end{align}
		\end{proof}
		
		\section{Conclusions} \label{sec:conclusions}
		In this paper, we have derived Araki-type trace inequalities that involve general monotone functions, extending beyond simple powers. Although usual Araki-type inequalities
		and the BLP inequalities are expressed in a stronger form of log-majorization,
		the antisymmetric tensor power trick may not be applicable to this broader class of functions. On the other hand, trace inequalities have broad and direct applications in fields such as quantum information science and mathematical physics. This motivates our effort to extend Araki-type trace inequalities to a more general setting.
		
		Numerical experiments suggest that Conjecture~\ref{conj:s>1} for $s> 1$ and a slightly generalized reverse BLP inequality in Conjecture~\ref{conj:GBLP_reverse} could be true, which remains for future work.
		
		\section*{Acknowledgments}
		We are supported by the Emerging Young Scholars Program of the National Science and Technology Council, Taiwan (R.O.C.) under Grants No.~NSTC 114-2628-E-002 -006, NSTC 114-2119-M-001-002, and NSTC 114-2124-M-002-003, by the Yushan Young Scholar Program of the Ministry of Education, Taiwan (R.O.C.) under Grants No.~ NTU-114V2016-1 and by the research project `Forefront Quantum Computing, Learning, and Engineering in Noisy Intermediate-Scale Quantum Era’ of National Taiwan University under Grant NTU-114L895005. H.-C.~Cheng acknowledges the support from the `Center for Advanced Computing and Imaging in Biomedicine (NTU-114L900702)’ through The Featured Areas Research Center Program within the framework of the Higher Education Sprout Project by the Ministry of Education (MOE) in Taiwan.

		
		\bibliographystyle{ieeetr}
		\newcommand{\bibliographytypesize}{\large}
		\bibliography{reference, operator}

	\end{document}